\newtheorem{thm}{Theorem}
\newtheorem{cor}{Corollary}
\newtheorem{lem}{Lemma}
\theoremstyle{definition}
\newtheorem{defn}{Definition}
\theoremstyle{remark}
\newtheorem{rem}{Remark}
\numberwithin{equation}{section}
\newcommand{\R}{\mathbb{R}}
\begin{document}

\thispagestyle {empty}

\title[]{ A note on free time evolution of the quantum wave function and Optimal Transportation }{08 06 22}
\author{\bf {Laura M. Morato}}

\dedicatory{ Universit\`a di Verona, Dipartimento di Informatica; E-mail:  laura.morato@univr.it  morato.lauramaria@gmail.com}

\begin{abstract}
It is shown that, in the absence of nodes and under regularity assumptions, a solution in a finite interval of time of the free Schroedinger equation solves a minimization problem which is a stochastic generalization of the classical optimal transportation problem with quadratic cost.
\end{abstract}
\maketitle

\pagestyle{myheadings}
\thispagestyle{myheadings}
\markboth{L. M. Morato} {Free evolution}


\section{Introduction}

Consider the free Schroedinger equation on $\R^d$

\begin{equation}\label{Schroedinger}
i\partial_t \psi + \frac1 2 \nabla^2 \psi = 0, \quad \psi(x,0)=\psi_o
\end{equation}

Putting $\rho := |\psi|^2$ and denoting by $S$ the principal argument of $\psi$ we can write

$$
\psi = \rho^{\frac 1 2}\exp ^{iS}
$$
\noindent so that, for all $(x,t)$ such that $\rho(x,t)$ is different from zero, \eqref{Schroedinger} is equivalent to

\begin{equation}\label{Mad1}
\partial_t \rho(x,t) + \nabla (\rho(x,t) \nabla S(x,t) )= 0
\end{equation}

\begin{equation}\label{Mad2}
\partial_t S(x,t) + \frac 1 2 (\nabla S(x, t))^2 -
\frac 1 2 \frac {\nabla^2 \sqrt{\rho(x,t)}}{\sqrt{\rho(x,t)}} = 0
\end{equation}

\noindent which are the (free) Madelung-fluid equations.

 Then $(\rho,\nabla S)$  acts as a "fluid-dynamical couple" and the first Madelung equation represents its continuity equation.

We address the question of whether $(\rho,\nabla S)$ is optimal in some sense among all time dependent fluid-dynamical couples $(\rho',v')$ which belong to a non trivial set such that $\rho'_o=\rho_o$ and $\rho'_1=\rho_1$, with $\rho_o :=|\psi_o|^2$ and $\rho_1 :=|\psi_1|^2$.

We do this by exploiting some features of Nelson's Stochastic Mechanics (see \cite{C2} for a review).

Introducing the drift-field

$$
b[\rho,\nabla S] := \nabla S +\frac 1 2 \nabla \log \rho
$$

\noindent we know, thanks to a general result due to Carlen \cite{C1}, that, if the quantum energy is finite at $t=0$, then there exists a Markov diffusion process $q^b$ with drift field $b=b[\rho,\nabla S]$ , diffusion matrix equal to the identity matrix  and time dependent probability density $\rho$. We call this the ''Nelson diffusion associated to $\psi\equiv \rho^{\frac 1 2}\exp ^{i S}$ ".

To be more precise, let $\Omega$ be the set of continous functions from $[0,1]$
to $\mathbb{R}^d$. Let $\mathcal F$ denote the associated Borel $\sigma$-algebra
and the filtration $(\mathcal F_t)_t$ be defined in the natural way. Let also $X_o$ be equal to $\omega(0)$.
Then we know that there exists a
probability measure $\mathbb{P}$ and a standard Brownian Motion $W$ on
$(\Omega,\mathcal F,(\mathcal F_t)_t)$,
such that  the configuration process $X$ satisfies the equality

\begin{equation}
X_t = X_0 + \int_0^t b[\rho,\nabla S] (X_s, s) ds +  W_t , \quad \mathbb P a.s.
\end{equation}
\noindent and $X_t$ has a propability density equal to $\rho(.,t)$ for all $t\in [0,1]$. Then  Nelson's diffusion is defined by identifying $q^{b[\rho,\nabla S]}$ with $X$.

In this paper we consider only the very regular case when $\rho_o$ is smooth and strictly positive and  the stochastic differential equation with coefficients $(b[\rho,\nabla S],I)$ has a strong solution. The more general case, when in particular $\psi$ has nodes, is currently the subject of further work.
\vskip 2mm
To formulate the optimization problem, we introduce a suitable set $\Xi_o(\rho_o,\rho_1)$ of smooth time dependent fluid-dynamical couples which connect $\rho_o$ to $\rho_1$ (see Definition 1).

Then, for a given choice of $\mathbb P$ and $W$ on $(\Omega,\mathcal F,\{\mathcal F_t\}_t)$, such that the law of $X_o$ has probability density $\rho_o$ and $W$ is independent of $X_o$  we define, for any $(\rho,v)$ in $\Xi_o$, the Nelson diffusion $q^b$, $b\equiv b[\rho,v]$,
as the solution of the S.D.E.

\begin{equation}\label{SDE1}
q^b_t = X_0 + \int_0^t b (q^b_s, s) ds +  W_t
\end{equation}

Introducing the equipartition $\{t_i\}_{i=0}^{n}$ of $[0,1]$ and putting, with

 $b\equiv b[\rho,v]$,

$$
\Delta q^b_i:=\int _{t_i}^{t_{i+1}} b(q^b_s,s)ds + (W_{t_{i+1}} - W_{t_i})
$$

\noindent we consider for every $\omega \in \Omega$ the classical action in discrete time

\begin{equation}\label{classical}
n \sum _{i=0}^{n-1}(\Delta q^b_i (\omega)) ^2
\end{equation}

 We take the average with respect to the initial configurations and  all possible Brownian paths, and  leave $n$ going to infinity. Exploitng  Nelson's renormalization formula \cite{Nelson*} to get rid of the divergent term, one gets

\begin{equation}
\lim _{n\to \infty} n\mathbb E \sum _{i=0}^{n-1}(\Delta q^b_i) ^2 - nd= \int _0^1 \int _{\R^d} (v^2- (\frac 1 2\nabla \ln \rho)^2)\rho dx dt
\end{equation}

Then we consider on $\Xi_o(\rho_o,\rho_1)$ the functional

\begin{equation}
A^Q(\rho,v):=
\int _{\R^d}\int_0^1 (v^2-(\frac 1 2 \nabla ln \rho)^2)\rho dt dx
\end{equation}
\vskip 5mm

The functional $A^Q$ and equivalent expressions of it, usually generalized by adding terms related to scalar
and vector potentials and possibly extended to the case when the configuration space
is a Riemannian manifold, were considered in the literature on S.M., mainly during the
'80s and '90s. In fact, within a stochastic control approach, their critical points were shown to be
related to the solutions of Schroedinger's equation \cite{GM}\cite{Nelson2}.
A fluid-mechanical reformulation of part of the results given in \cite{GM} was proposed by
Loffredo \cite{L} and was frequently adopted in the literature on S.M..
An approach based on stochastic differential games was proposed in \cite{Pavon} and, recently,  a relationship with Fisher information  was also suggested \cite{Yang}.

\vskip 5mm

The problem of establishing whether the critical points of $A^Q$, and of its equivalent expressions,
correspond to minimizers or not has remained unsolved. The main difficulty comes from the non convexity of
the functional.

It is worth mentioning that a variational method which allows deriving the Schroedinger equation in the framework of Nelson's Stochastic Mechanics starting from a convex functional,  was proposed by Yasue in \cite{Yasue}. Unfortunately  in his approach the "variations of a Nelson diffusion" are assumed to be smooth functions of the diffusion itself. This allows to exploit a nice integration by parts formula. But, as a consequence,  the "varied motions" are  Markov diffusions with a new, non constant, diffusion coefficient. Thus the "varied motions", at variance with what happens within the stochastic control approach adopted in \cite{GM} starting from $A^Q$, they are no longer Nelson diffusions and this characteristic makes difficult formulating a minimization problem.

\vskip 5mm
In this work we construct on a proper space a convex functional such that a suitable restriction of it is equivalent, in a proper sense, to $A^Q$ (see \eqref{Finfinito},\eqref{eguaglianza} and \eqref{eguaglianza2}). We prove that, if $\psi \equiv \rho^{\frac 1 2} \exp i S$ , with satisfies \eqref{Schroedinger} with $|\psi_o|^2= \rho_o$, $|\psi_1|^2= \rho_1$ and $(\rho,\nabla S)\in \Xi_o(\rho_o,\rho_1)$, then

 \begin{equation}
A^Q(\rho,\nabla S)\leq A^Q(\rho',v'),\quad \forall (\rho',v')\in \Xi_o(\rho_o, \rho_1)
\end{equation} .

\newpage
\section{A convex asymptotic functional}

 We consider the probability space $(\Omega,\mathcal F,(\mathcal F_t)_t)$, where $\Omega$ is the set of continous functions from $[0,1]$ to $\mathbb{R}^d$, $\mathcal F$ denotes the associated Borel $\sigma$-algebra
and the filtration $(\mathcal F_t)_t$ is defined in the natural way.

 Let also $X_o := \omega(0)$. Let $\mathbb P$  be a probability measure such that the law of $X_o$  has   the probability density $\rho_o$ and let $W$ be a standard Brownian Motion independent of $X_o$.

\begin{rem}
Let $\rho_o$ be smooth and strictly positive and $b: R^d\times [0,1] \to \R^d$ a smooth time dependent drift field with sublinear growth at infinity. Then the Stochastic Differential Equation with coefficients $(b,I)$ has a unique strong solution, so that there exists a  unique continous square integrable Markov process $q^b$ s.t.

\begin{equation}\label{SDE}
q_t^b = X_0 + \int_0^t b(q_s^b , s) ds + W_t, \quad \mathbb P a.s.
\end{equation}
The process admits a time continous probability density $\rho$ which is smooth and strictly positive. Defining the "current velocity field" $v$ by $v:= b-\frac 1 2 \nabla \log \rho$ , the Fokker-Planck equation, describing the time evolution of the probability density $\rho$, takes the form the continuity equation for the pair $(\rho,v)$,i.e.

\begin{equation}\label{continuity}
\partial _t\rho + \nabla (\rho v)=0
\end{equation}
\noindent (see \cite{Nelson1}).
\end{rem}
\vskip 2mm

\begin{defn}
We denote by $\Xi_o$ the set of pairs $(\rho,v)$ where $\rho$ is a strictly positive smooth time dependent probability density on $\R^d$, $v$ is a smooth time dependent velocity field on $\R^d$ such that

\begin{equation}
\partial _t\rho + \nabla (\rho v)=0\quad \quad \text{(continuity equation)}
\end{equation}

\noindent and

\begin{equation}\label{finite action}
\int _0^1 \int _{\R^d} (v^2+ (\frac 1 2\nabla \ln \rho)^2)\rho dx dt < \infty,\quad \quad \text{(finite action condition)}
\end{equation}

Moreover  $v$ and $ \nabla \log \rho$ are assumed to have a sublinear growth at infinity.

\end{defn}
The set $\Xi_o$ is a small subset of the "set of proper infinitesimal characteristics" introduced by Carlen in \cite{C3}
\vskip 4mm

We now introduce the following space of processes
\begin{equation}\label{Ldue}
L^2_{[0,1]}(\mathbb{P}) := \{\beta:\Omega \times [0, 1]\rightarrow\mathbb{R}^d\; s.t.
\int_\Omega \int_0^1\beta^2(\omega,s) ds \mathbb P(d\omega) <\infty\}
\end{equation}

Let define

 $$
 q_t^\beta := X_0 + \int_0^t \beta_s ds +W_t ,\quad   X_0(\omega)=\omega(0)
 $$
  and
$$
\Delta q_i^\beta := q_{t_{i+1}}^\beta - q_{t_i}^\beta
$$

 Considering the equipartition $\{t_i\}_{i=0}^{n}$ of $[0,1]$ and denoting by $\mathbb E$ the integration with respect to $\mathbb P$, we introduce the convex functional $F_n: L_{[0,1]}^2 (\mathbb{P})\rightarrow \mathbb{R}$

\begin{equation}\label{Fn}
F_n (\beta) :=  n\mathbb E \sum_{i=0}^{n-1} (\triangle q_i^\beta)^2
\end{equation}

Let now $\beta$ be a "Markovian drift", i.e. such that
$\beta_t = \beta_t^b $ where

\begin{equation}\label{beta}
\beta_t^b:= b (q^b_t,t),
\end{equation}

\noindent $q^b$ being the solution of the stochastic differential equation \eqref{SDE}.

 Introducing the notation

\begin{equation}\label{brhov}
b[\rho,v] := v +\frac 1 2 \nabla \log \rho ,
\end{equation}

\noindent if $b$ is equal to $b[\rho,v]$  with $(\rho,v)\in \Xi_o$ , then, by the finite action condition, $\beta^b := (\beta ^b_t)_{t\in [0,1]}$  belongs to $L^2_{[0,1]}(\mathbb P)$.

Moreover, exploiting  Nelson's renormalization formula (\cite{Nelson*}) we can take the limit for $n$ going to infinity, getting, with $b\equiv b[\rho,v]$,

\begin{equation}\label{rin1}
\lim_{n\rightarrow\infty} n \mathbb{E} \sum_{i=0}^{n-1} (\triangle q_i^b)^2 - nd =
\mathbb{E}\int_0^1 (b^2 (q^b_t , t) + \nabla b (q_t^b, t))dt   \quad\quad
\end{equation}

and, integrating by parts,

\begin{equation}\label{rin2}
\mathbb{E}\int_0^1 (b^2(q_t^b, t)+\nabla b(q_t^b, t))dt = \int_0^1 \int_{\mathbb{R}^d}
 (v^2 - (\frac 1 2 \nabla ln \rho)^2)\rho dx dt <\infty   \quad\quad
\end{equation}

Then, for any "Markovian element" $\beta^b$,
$b\equiv b[\rho,v]$, $(\rho,v)$ in $\Xi_o$, we have

\begin{equation}
\lim_{n\rightarrow\infty} (F_n(\beta^b)- nd)< \infty
\end{equation}

\noindent and, for any $\lambda\in[0,1]$ and $(\beta^{b_1},\beta^{b_2}) \in L^2_{[0,1]}(\mathbb{P})$, $b_1 :=b[\rho_1,v_1]$ and $b_2 :=b[\rho_2,v_2]$ with $(\rho_1,v_1)$ and $ (\rho_2,v_2)$ in $\Xi_o$,

\begin{multline}
\lim_{n\rightarrow\infty}(F_n(\lambda\beta^{b_1} + (1 - \lambda)\beta^{b_2})- nd)\leq\\
 \leq \lim_{n\rightarrow\infty} \{\lambda (F_n(\beta^{b_1}) - nd) + (1-\lambda)(F_n(\beta^{b_2})- nd)\} < \infty
\end{multline}

Since the convex combination of three elements is equal to the convex combination of proper two elements, one can see by induction that for any finite convex combination
$\sum_{i=1}^m \alpha_i\beta^{b_i}$, $(\beta ^{b_i})_{i=1}^{m}$  being Markovian elements in $L_{[0,1]}^2 (\mathbb{P})$, we have
$$
\underset {n\to \infty} {\lim}[F_n (\sum_{i=1}^m \alpha_i \beta^{b_i}) - nd] < \infty
$$
Denoting by $\Sigma$ the convex  set given by all finite convex combinations of Markovian elements
in $L^2_{[0,1]}(\mathbb{P})$, we can define the convex functional

$$\hat{F}_\infty : L^2_{[0,1]}(\mathbb{P})\to \bar \R $$

\begin{equation}\label{Finfinito}
\hat F_\infty (\beta) := \begin{cases}
                               \underset {n\to \infty}{lim}(F_n(\beta)- nd)\quad &\forall\beta \in \Sigma \\
                               +\infty \quad \text {otherwise}
                             \end{cases}
\end{equation}

The elements of $\Sigma$ are not markovian in general and they are somehow reminiscent of the quantum mixtures, but in fact describing quantum mixtures would require an enlarged probability space (see for example \cite{CPM}).

\newpage
\section{Critical points and minima }

\begin{defn}

  Let $(\Omega,\mathcal F, (\mathcal F_t))_t)$, $(\mathbb P,W)$ and $X_o$ be defined as in the beginning of Section 2. Assume also $b\equiv b[\rho,v]$, with $(\rho,v)$ in $\Xi_o$ and let  $q^b $ be defined by \eqref{SDE}.

We define the functionals

$$I:\{b = b[\rho,v] :(\rho,v)\in \Xi_o     \}\to \R$$

\begin{equation}\label{I}
b \mapsto \mathbb{E}\int_0^1 (b^2 (q^b_t , t) + \nabla b (q_t^b, t))dt
\end{equation}

\noindent and,

$$
A^Q :\{ (\rho,v):\text{ finite action condition holds} \} \to \R
$$

\begin{equation}\label{A}
(\rho,v) \mapsto\int_0^1 \int_{\mathbb{R}^d}
 (v^2 - (\frac 1 2 \nabla ln \rho)^2)\rho dx dt
\end{equation}.

\end{defn}

\vskip 2mm

By \eqref{rin2} we have, $\forall (\rho,v)\in \Xi_o$
\begin{equation}\label{eguaglianza}
A^Q(\rho,v) = I(b[\rho,v])
\end{equation}

\noindent and, defining $\beta^{b[\rho,v]}$ by \eqref{beta}and \eqref{brhov},

\begin{equation}\label{eguaglianza2}
I(b[\rho,v]) = \hat F_\infty (\beta^{b[\rho,v]})
\end{equation}

Let $\Xi_o(\rho_o,\rho_1)$ be constituted by all elements $(\rho,v)$ in $\Xi_o$ such that $\rho(.,0)$ is equal to $\rho_o$ and $\rho(.,1)$ is equal to $\rho_1$. We face the problem of looking for possible  minima of $A^Q$ on $\Xi_o(\rho_o,\rho_1)$.

\vskip 2mm

As a first step we revisit the fluid dynamical version of the variational principle given in \cite{GM}, that was  proposed in \cite{L}.

\vskip 2mm

  For any sufficiently regular function $f$ of a pair $(\rho,v)$ in $\Xi_o$ and for any sufficiently regular pair  $(\delta \rho :(\R\times [0,1]\to \R),\delta v: (\R^d\times [0,1]\to \R^d))$   we will use the short-hand notation
 \begin{equation}\label{Dtilde}
\tilde D f(\rho,v) (\delta\rho,\delta v) := \underset{\epsilon \to 0}{lim} \frac 1 \epsilon [f (\rho + \epsilon \delta \rho, v + \epsilon \delta v)-f (\rho , v ) ]
\end{equation}

\begin{defn}
We will say that $(\rho,v)\in \Xi_o(\rho_o,\rho_1)$ is a critical point of $A^Q$ if

$$
\tilde D A^Q(\rho,v) (\delta\rho,\delta v)=0
$$
\noindent for all $(\delta \rho,\delta v)$ satisfying the conditions
\vskip 2mm
 a) $\delta \rho \in C_K^\infty (\mathbb{R}^d\times [0,1]\to \mathbb{R})$, $\delta v \in C_K^\infty(\R^d\times [0,1]\to \R ^d)$ with $\delta \rho_o=0$ and $\delta \rho_1=0$

 b) $ \frac {d}{dy} \{\partial_t (\rho + y \delta\rho) +\nabla [(\rho+y\delta \rho)(v+y\delta v)]\}|_{y=0} = 0$.
\vskip 2mm
\end{defn}

\begin{lem}

Let $(\delta \rho,\delta v)$ satisfy conditions a) and b).
\vskip 2mm

 Then a sufficient condition in order that an element $(\rho,v)$  of   $\Xi_o(\rho_0,\rho_1)$ satisfies the equality
\begin{equation}\label{derivata}
\tilde D A^Q (\rho, v) (\delta\rho, \delta v) = 0 ,\quad \forall (\delta \rho, \delta v)\quad \text{s.t. a) and b) hold}
\end{equation}
\noindent is that, for all $(x,t)\in \R^d\times [0,1]$
\vspace {2mm}

i) There exists a smooth $S : \mathbb{R}^d \times [0,1] \to  \R$ such that
$$
v(x,t) = \nabla S(x,t)
$$

ii)  $(\rho,\nabla S) $satisfies Madelung's equations, so that

\begin{equation}\label{Mad1}
\partial_t \rho(x,t) + \nabla (\rho(x,t) \nabla S(x,t) )= 0
\end{equation}

\begin{equation}\label{Mad2}
\partial_t S(x,t) + \frac 1 2 (\nabla S(x, t))^2 -
\frac 1 2 \frac {\nabla^2 \sqrt{\rho(x,t)}}{\sqrt{\rho(x,t)}} = 0
\end{equation}

\end{lem}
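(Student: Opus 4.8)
\emph{Proof plan.} The plan is to compute the first variation $\tilde D A^Q(\rho,v)(\delta\rho,\delta v)$ explicitly and then to show that the hypotheses i) and ii), together with the side condition b), collapse it into the integral of the left-hand side of the second Madelung equation \eqref{Mad2} tested against $\delta\rho$, which is identically zero. Observe that \eqref{Mad1} carries no extra information here: it is simply the continuity equation, which holds by the very definition of $\Xi_o$.

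First I would write $A^Q(\rho,v)=\int_0^1\int_{\R^d}\bigl(v^2\rho-\tfrac{(\nabla\rho)^2}{4\rho}\bigr)\,dx\,dt$ and differentiate under the integral sign, obtaining
\[
\tilde D A^Q(\rho,v)(\delta\rho,\delta v)=\int_0^1\!\!\int_{\R^d}\Bigl(2\rho\,v\cdot\delta v+v^2\,\delta\rho-\frac{\nabla\rho\cdot\nabla\delta\rho}{2\rho}+\frac{(\nabla\rho)^2}{4\rho^2}\,\delta\rho\Bigr)\,dx\,dt .
\]
Every integration by parts used below is justified by condition a) ($\delta\rho,\delta v$ have compact support in space, with $\delta\rho$ vanishing at $t=0$ and $t=1$) together with the smoothness and sublinear growth of $\rho$, $v$ and $\nabla\log\rho$ from Definition 1, so that all boundary terms at spatial infinity vanish as well.

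Next I would bring in hypothesis i), $v=\nabla S$. The term $2\rho\,v\cdot\delta v$ is the only one containing $\delta v$; integrating it by parts in $x$ produces $-2S\,\nabla\cdot(\rho\,\delta v)$, and condition b), which reads $\partial_t\delta\rho+\nabla\cdot(v\,\delta\rho+\rho\,\delta v)=0$, lets me substitute $\nabla\cdot(\rho\,\delta v)=-\partial_t\delta\rho-\nabla\cdot(v\,\delta\rho)$. One further integration by parts in $x$ (using $v=\nabla S$, hence $\nabla S\cdot v=v^2$) and one in $t$ (using $\delta\rho(\cdot,0)=\delta\rho(\cdot,1)=0$) turn this contribution into $\int_0^1\int_{\R^d}(-2\partial_t S-2v^2)\,\delta\rho\,dx\,dt$. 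For the remaining $\nabla\delta\rho$ term I would integrate by parts once more in $x$, using $\nabla\cdot\bigl(\tfrac{\nabla\rho}{2\rho}\bigr)=\tfrac{\nabla^2\rho}{2\rho}-\tfrac{(\nabla\rho)^2}{2\rho^2}$ and the elementary identity $\tfrac{\nabla^2\sqrt{\rho}}{\sqrt{\rho}}=\tfrac{\nabla^2\rho}{2\rho}-\tfrac{(\nabla\rho)^2}{4\rho^2}$. Collecting all the $\delta\rho$ terms,
\[
\tilde D A^Q(\rho,\nabla S)(\delta\rho,\delta v)=-2\int_0^1\!\!\int_{\R^d}\Bigl(\partial_t S+\tfrac12(\nabla S)^2-\tfrac12\,\frac{\nabla^2\sqrt{\rho}}{\sqrt{\rho}}\Bigr)\,\delta\rho\,dx\,dt .
\]
By hypothesis ii) the bracket is identically zero, being exactly the left-hand side of \eqref{Mad2}; hence the right-hand side vanishes for every admissible pair $(\delta\rho,\delta v)$, which is the assertion \eqref{derivata}.

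The only labour involved is careful bookkeeping of signs across the three integrations by parts and a check that no boundary contribution survives; as noted, the compact support in a) and the growth conditions in Definition 1 are precisely what is needed for the latter, so I do not anticipate any genuine obstacle beyond this routine computation.
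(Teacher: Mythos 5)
Your computation is correct: the first variation you write down is right, the use of condition b) to replace $\nabla\cdot(\rho\,\delta v)$ by $-\partial_t\delta\rho-\nabla\cdot(v\,\delta\rho)$ is legitimate, the boundary terms all vanish thanks to a) and the compact supports, and your final identity
$\tilde D A^Q(\rho,\nabla S)(\delta\rho,\delta v)=-2\int_0^1\int_{\R^d}\bigl(\partial_t S+\tfrac12(\nabla S)^2-\tfrac12\,\nabla^2\sqrt{\rho}/\sqrt{\rho}\bigr)\,\delta\rho\,dx\,dt$
is exactly what makes \eqref{Mad2} sufficient (you are also right that \eqref{Mad1} carries no extra information for $(\rho,\nabla S)\in\Xi_o$). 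Your route differs from the paper's mainly in presentation: the paper augments $A^Q$ by the Lagrange-multiplier term $\int_0^1\int_{\R^d}\lambda\,(\partial_t\rho+\nabla(\rho v))\,dx\,dt$, computes the unconstrained variations \eqref{E1}--\eqref{E2} of the augmented functional $F$ (condition b) enters only to identify $\tilde D F$ with $\tilde D A^Q$), and then chooses $\lambda=2S$, $v=\nabla S$, so that the $\delta v$-equation \eqref{E2} vanishes identically and the $\delta\rho$-equation \eqref{E1} reduces to \eqref{Mad2}; you instead eliminate $\delta v$ directly by substituting the linearized continuity constraint into the first variation. The two arguments are algebraically the same computation --- your substitution is precisely what the multiplier $\lambda=2S$ accomplishes --- but your version has the merit of exhibiting the variation explicitly as the Madelung expression tested against $\delta\rho$, whereas the paper's multiplier form separates the stationarity conditions in $\delta\rho$ and $\delta v$ and thereby shows why $\lambda=2S$, $v=\nabla S$ is the natural choice.
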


\begin{proof}

\vspace {2mm}

\vspace{2mm}

Let $\lambda : \mathbb{R}^d \times [0,1] \to  \R$  be smooth. Define

$$
F(\rho,v,\lambda) := A^Q (\rho, v) + \int_0^1 \int_{\R^d} \lambda (\partial_t \rho + \nabla (\rho v)) dx dt
$$

In the given assumptions one has, for every $(\rho,v)$ in $\Xi_o$

$$
F(\rho,v,\lambda) := A^Q (\rho, v)
$$
\noindent and, for all $(\delta \rho,\delta v)$ which satisfy a) and b),
$$
\tilde D F(\rho, v, \lambda) (\delta \rho, \delta v, 0) = \tilde D A^Q (\rho, v)(\delta \rho, \delta v)
$$

Expliciting $\tilde D $ and integrating by parts, one gets

$$
\tilde D F(\rho, v, \lambda) (\delta\rho, 0, 0) =
 $$
\begin{equation}\label{E1}
= \int_0^1 \int_{\R^d} [(-\partial_t \lambda - \nabla \lambda) v + v^2 + \frac {\nabla^2 \sqrt\rho}{\sqrt \rho}]\delta \rho dx dt+[\lambda\delta \rho]_0^1 +[\lambda v \delta \rho]_{-\infty}^{+\infty}
\end{equation}

\noindent and
$$
\tilde D F(\rho, v, \lambda) (0, \delta v, 0) =
$$
\begin{equation}\label{E2}
= \int_0^1 \int_{\R^d} (2 v - \nabla \lambda) \rho \delta v dx dt +[\lambda \rho \delta v]_{-\infty}^{+\infty}
\end{equation}

\noindent where all boundary terms are equal to zero. Then putting $\lambda  = 2 S $ and $v=\nabla S$ , we get, by \eqref{Mad2},
$$
\tilde D A^Q (\rho,\nabla S)(\delta \rho,\delta v)=\tilde D F(\rho, \nabla S, 2 S) (\delta\rho, \delta v, 0) = 0
$$

\vskip 4mm

\end{proof}

\begin{cor}

If the Schroedinger equation

\begin{equation}\label{Schroedinger2}
i\partial_t \psi + \frac1 2 \nabla^2 \psi = 0, \quad |\psi_o|^2 =\rho_o, \quad |\psi_1|^2 =\rho_1
\end{equation}

\noindent is satisfied by  $\psi:= \rho^{\frac 1 2}\exp (iS)$ with $(\rho,\nabla S)\in \Xi_o$ then \eqref{derivata} holds.
\end{cor}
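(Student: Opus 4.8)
The plan is to deduce the corollary directly from the Lemma above by checking its two sufficient conditions for the element $(\rho,\nabla S)$. First I would record the regularity: since $(\rho,\nabla S)\in\Xi_o$, the density $\rho$ is smooth and strictly positive on all of $\R^d\times[0,1]$, and because $\R^d$ is simply connected and $\psi$ is smooth and nowhere vanishing, the phase $S$ can be taken as a globally defined smooth function. Hence $v:=\nabla S$ is genuinely a smooth gradient field, which is exactly condition i) of the Lemma.

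Next I would invoke the equivalence, recalled in the Introduction, between the free Schroedinger equation \eqref{Schroedinger2} and the Madelung system \eqref{Mad1}--\eqref{Mad2}, which is valid precisely on the set where $\rho\neq 0$. Since in our situation $\rho>0$ everywhere, substituting $\psi=\rho^{1/2}e^{iS}$ into $i\partial_t\psi+\frac12\nabla^2\psi=0$ and separating real and imaginary parts shows that $(\rho,\nabla S)$ solves \eqref{Mad1}--\eqref{Mad2} on the whole of $\R^d\times[0,1]$; the division by $\sqrt{\rho}$ appearing in \eqref{Mad2} is legitimate here exactly because there are no nodes. This is condition ii) of the Lemma.

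It then remains only to observe that $(\rho,\nabla S)$ actually belongs to $\Xi_o(\rho_o,\rho_1)$, not just to $\Xi_o$: the hypotheses $|\psi_o|^2=\rho_o$ and $|\psi_1|^2=\rho_1$ say precisely that $\rho(\cdot,0)=\rho_o$ and $\rho(\cdot,1)=\rho_1$, which is the defining constraint of $\Xi_o(\rho_o,\rho_1)$. With i), ii) and this membership established, the Lemma applies verbatim and gives $\tilde D A^Q(\rho,\nabla S)(\delta\rho,\delta v)=0$ for every $(\delta\rho,\delta v)$ satisfying conditions a) and b), which is the assertion \eqref{derivata}.

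The only genuinely delicate point is the passage from the Schroedinger equation to the Madelung equations, which rests on the absence of nodes (so that $\rho^{1/2}$ and $S$ are smooth and \eqref{Mad2} makes sense) and, more pedantically, on the existence of a single smooth branch of the phase on $\R^d$. Both are automatic under the standing hypothesis $(\rho,\nabla S)\in\Xi_o$, so beyond this bookkeeping no further argument is needed: everything substantive has already been carried out in the proof of the Lemma.
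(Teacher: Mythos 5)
Your proposal is correct and follows exactly the argument the paper intends: since $\rho>0$ everywhere, writing $\psi=\rho^{1/2}e^{iS}$ and separating real and imaginary parts in \eqref{Schroedinger2} yields the Madelung system \eqref{Mad1}--\eqref{Mad2}, i.e.\ conditions i) and ii) of Lemma 1, and the boundary data $|\psi_o|^2=\rho_o$, $|\psi_1|^2=\rho_1$ place $(\rho,\nabla S)$ in $\Xi_o(\rho_o,\rho_1)$, so the Lemma gives \eqref{derivata}. No further comment is needed.
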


\vskip 4mm

As a second step we show that Lemma 1 and the convexity of $\hat F_\infty$ allows to solve a minimization problem for $A^Q$.

\begin{thm}

Let $(\rho_o,\rho_1)$ be probability densities on $\R^d$ with finite variance.

Assume that $\psi:= \rho^{\frac 1 2}\exp^{iS}$ satisfies  the free Schroedinger equation

\begin{equation}
i\partial_t \psi + \frac1 2 \nabla^2 \psi = 0\quad,\quad |\psi_0|^2 = \rho_0 , |\psi_1|^2 = \rho_1,
\end{equation}

\noindent and that $(\rho,\nabla S)$ belongs to $ \Xi_o(\rho_o,\rho_1)$.

Then

\begin{equation}\label{inequality}
A^Q(\rho,\nabla S) \leq A^Q(\rho',v')
\end{equation}

$$
\forall (\rho',v')\in \Xi_o(\rho_o,\rho_1)
$$

\end{thm}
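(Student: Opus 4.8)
The plan is to deduce the global inequality from three ingredients already assembled in the paper: the identification $A^Q(\rho,v)=I(b[\rho,v])=\hat F_\infty(\beta^{b[\rho,v]})$ valid on all of $\Xi_o$, the convexity of $\hat F_\infty$ on the set $\Sigma$ of finite convex combinations of Markovian elements, and the fact (Lemma 1 together with its Corollary) that $(\rho,\nabla S)$ is a critical point of $A^Q$ on $\Xi_o(\rho_o,\rho_1)$. The point is that a convex functional attains a global minimum at any interior critical point; the work is to make the word "critical" match the convex-analysis sense along straight-line segments inside $\Xi_o(\rho_o,\rho_1)$.

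First I would fix an arbitrary competitor $(\rho',v')\in\Xi_o(\rho_o,\rho_1)$ and form the segment $(\rho_y,v_y)$ for $y\in[0,1]$ interpolating between $(\rho,\nabla S)$ at $y=0$ and $(\rho',v')$ at $y=1$. The natural interpolation here is not the linear one in $(\rho,v)$ but the one that keeps the continuity equation satisfied, i.e. $\rho_y:=(1-y)\rho+y\rho'$ together with $v_y:=\big((1-y)\rho v+y\rho'v'\big)/\rho_y$, so that $\rho_y v_y$ is affine in $y$ and the pair $(\rho_y,v_y)$ lies in $\Xi_o(\rho_o,\rho_1)$ for every $y$ (same endpoints $\rho_o,\rho_1$, continuity equation preserved, finite-action and sublinear-growth conditions inherited by convexity of the relevant quantities and smoothness of the data). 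Correspondingly $\beta^{b_y}$ with $b_y=b[\rho_y,v_y]$ is, for $y$ rational or after passing to convex combinations, an element of $\Sigma$, and in fact $\beta^{b_y}$ is the $\hat F_\infty$-image of a genuine convex combination of the two Markovian elements $\beta^{b[\rho,\nabla S]}$ and $\beta^{b[\rho',v']}$ at the level of the underlying diffusions. Then define $g(y):=A^Q(\rho_y,v_y)=\hat F_\infty(\beta^{b_y})$.

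Next I would argue that $g$ is convex on $[0,1]$: this is exactly the convexity inequality for $\hat F_\infty$ established in the excerpt, applied to the two Markovian endpoints, once one checks that the curve $y\mapsto\beta^{b_y}$ corresponds under $\hat F_\infty$ to the curve $y\mapsto(1-y)\beta^{b[\rho,\nabla S]}+y\beta^{b[\rho',v']}$ of elements of $\Sigma$. Since $(\rho,\nabla S)$ satisfies Madelung's equations, the perturbation $(\delta\rho,\delta v):=(\rho'-\rho,\,(\rho'v'-\rho\nabla S)/\rho)$ satisfies condition b) by construction (the $y$-derivative of the continuity-equation residual vanishes identically because $\partial_t\rho_y+\nabla(\rho_yv_y)$ is affine in $y$ and vanishes at both endpoints); modulo the cutoff/boundary technicalities it also plays the role of an admissible variation in the sense of a), so Lemma 1 and its Corollary give $g'(0)=\tilde DA^Q(\rho,\nabla S)(\delta\rho,\delta v)=0$. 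A convex function on $[0,1]$ with vanishing right derivative at $0$ satisfies $g(1)\ge g(0)$, which is precisely $A^Q(\rho',v')\ge A^Q(\rho,\nabla S)$.

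The main obstacle I anticipate is the mismatch between the class of variations allowed in Lemma 1 (namely $\delta\rho,\delta v\in C_K^\infty$, compactly supported) and the actual perturbation $(\rho'-\rho,\cdots)$ connecting two arbitrary elements of $\Xi_o(\rho_o,\rho_1)$, which is only smooth with sublinear growth, not compactly supported; one must verify that the integration-by-parts boundary terms at spatial infinity in \eqref{E1}--\eqref{E2} still vanish under the finite-variance and finite-action hypotheses (this is where "finite variance of $\rho_o,\rho_1$" enters), and one must confirm that $g$ is differentiable at $0$ with derivative equal to $\tilde DA^Q$ rather than merely one-sided-bounded. A secondary point is ensuring $(\rho_y,v_y)\in\Xi_o$ for all $y\in[0,1]$, in particular strict positivity of $\rho_y$ (immediate, as a convex combination of strictly positive densities) and the finite-action bound for the interpolant (which follows since $v_y^2\rho_y\le(1-y)\,v^2\rho+y\,(v')^2\rho'$ by convexity of $(m,\rho)\mapsto m^2/\rho$ applied to $m=\rho_y v_y$, and similarly one controls the $(\nabla\log\rho_y)^2\rho_y$ term). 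Once these regularity checks are in place, the convexity argument closes the proof.
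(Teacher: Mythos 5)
Your overall skeleton is the same as the paper's: show $(\rho,\nabla S)$ is a critical point via Lemma 1, transfer through the chain $A^Q=I=\hat F_\infty$ (equations \eqref{eguaglianza}, \eqref{eguaglianza2}), and upgrade to a global minimum by convexity of $\hat F_\infty$. But two of the steps you leave to be ``checked'' are precisely where your route breaks, and they are the points the paper's proof is built to handle differently. First, the identification of your curve $y\mapsto\beta^{b[\rho_y,v_y]}$ with the straight segment $y\mapsto(1-y)\beta^{b[\rho,\nabla S]}+y\beta^{b[\rho',v']}$ in $L^2_{[0,1]}(\mathbb P)$ is false. The element $\beta^{b[\rho_y,v_y]}$ is the drift field $b[\rho_y,v_y]$ evaluated along its \emph{own} diffusion $q^{b[\rho_y,v_y]}$, while the convex combination of the two Markovian elements is $(1-y)b_1(q^{b_1}_t,t)+y\,b_2(q^{b_2}_t,t)$ with $q^{b_1},q^{b_2}$ driven by the same $W$ and $X_0$; the process it generates is $(1-y)q^{b_1}+y\,q^{b_2}$, whose time marginals are not $(1-y)\rho+y\rho'$, and there is no identity relating $\hat F_\infty$ of that combination to $A^Q(\rho_y,v_y)$ (even at the level of fields, $b[\rho_y,v_y]\neq(1-y)b_1+y\,b_2$, since neither $\nabla\log\rho_y$ nor your momentum-affine $v_y$ is affine in $y$). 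Hence the convexity of the scalar function $g(y)$ does not follow from the convexity inequality for $\hat F_\infty$ as you claim, and without it the ``zero derivative at $0$ plus convexity'' conclusion collapses. The paper does not make this identification: it keeps its own curve $(\rho+yg,\nabla S+X^g_y)$ and appeals to the convexity of $\hat F_\infty$ together with the vanishing of the $y$-derivative at $0$; if you want to improve on that step you must actually prove convexity of $y\mapsto\hat F_\infty(\beta^{b[\rho_y,v_y]})$, which your proposal does not do.

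Second, the mismatch between your perturbation $(\rho'-\rho,\;\tfrac{d}{dy}v_y|_{y=0})$ and the class of variations admitted in Lemma 1 (compactly supported $\delta\rho,\delta v$) is not a side issue to be deferred: it is exactly what the paper's two-stage construction resolves. The paper first treats competitors with $g=\rho'-\rho\in C_K^\infty$, building a compactly supported correction field $X^g_y$ by solving a divergence equation so that $(\rho+yg,\nabla S+X^g_y)$ stays in $\Xi_o(\rho_o,\rho_1)$ and conditions a), b) of Lemma 1 hold verbatim; it then reaches a general $(\rho',v')\in\Xi_o(\rho_o,\rho_1)$ by the cutoff/exhaustion argument with $h_j$, $D_j$ and a passage to the limit in $A^Q$. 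Your proposal replaces this by the assertion that the boundary terms in \eqref{E1}--\eqref{E2} vanish ``under the finite-variance and finite-action hypotheses,'' but this is not verified and is not automatic for merely sublinearly growing $\delta v$ and non-compact $\delta\rho$. A smaller slip: as written, $\delta v=(\rho'v'-\rho\nabla S)/\rho$ is not the $y$-derivative of your interpolation (that derivative is $(\rho'v'-\rho\nabla S)/\rho-(\rho'-\rho)\nabla S/\rho$), and with your written $\delta v$ condition b) fails, since $\partial_t\delta\rho+\nabla(\rho\,\delta v+v\,\delta\rho)=\nabla\bigl(\nabla S\,(\rho'-\rho)\bigr)\neq 0$ in general. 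So the proposal reproduces the paper's strategy in outline but leaves genuine gaps at the convexity step and at the admissibility-of-variations step, both of which the paper addresses by its specific construction.
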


\begin{proof}

Let $(\rho',v')$ belong to $ \Xi_o(\rho_o,\rho_1)$. Put $g:= \rho'-\rho$. Being $\rho'$ and $\rho$  normalized to $1$ one has that $\int_{\R^d} g dx$ is equal to $0$ and that $\rho +yg$ is also normalized to $1$ and strictly positive for all $y\in [-1,1]$.
We consider firstly the case when $g$ is of class $C_K^\infty$.

Introducing a time dependent vector field $X_y^g$, we consider the family $(\rho'_y,v'_y)_{y\in [-1,1]}$  defined by

\begin{equation}\label{family}
\begin{cases}
\rho'_y := \rho +yg\\
v'_y := \nabla S + X_y^g
\end{cases}
\end{equation}

Requiring $(\rho'_y,v'_y)$ to satisfy, for all $y\in [-1,1]$, the continuity equation

\begin{equation}
\partial _t\rho'_y + \nabla (\rho'_y v'_y)=0
\end{equation}

\noindent and putting

$$
u:= (\rho +y g)(X_y^g)
$$

\noindent one gets
$$
\sum _{i=1}^{d}\partial_{x_i} u_i= -y[\partial _t g+\sum _{i=1}^{d}\partial_{x_i} (g \partial_{x_i} S)]
$$

This equation admits solutions $u$ in $C_K^\infty (\R^d\times [0,1]\to \R^d)$ so that $X_y^g$ can be chosen in $C_K^\infty (\R^d\times [0,1]\to \R^d)$.  Then $(\rho'_y,v'_y)$ belongs to $ \Xi_o(\rho_o,\rho_1)$  for all $y\in [-1,1]$.

\noindent Moreover one can check that $X_y^g$ depends smoothly on $y$ and that $\frac {d}{dy} X_y^g|_{y=o}$ also belongs to $C_K^\infty (\R^d\times [0,1]\to \R^d)$. Defining

$$
\delta \rho := g
$$

\noindent and

$$
\delta v:= \frac {d} {dy} X_y^g|_{y=0}
$$
\noindent one can see that conditions a) and b) in Lemma 1 are satisfied. Then

\begin{equation}\label{base}
\lim_{y\to 0}\frac 1 y (A^Q(\rho'_y,v'_y)-A^Q(\rho,\nabla S))=
 \tilde D A^Q(\rho,\nabla S)(\delta\rho,\delta v)=0
\end{equation}
\vskip2mm

 We can exploit \eqref{eguaglianza} to get

 \begin{equation*}
 \lim _{y \to 0}\frac 1 y(I(b[\rho'_y,v'_y])-  I(b[\rho,\nabla S]))=0
 \end{equation*}

 \noindent which, by \eqref{eguaglianza2}, is equivalent to

$$ \underset{y \to 0}{\lim}\frac 1 y( \hat F_\infty (\beta^{b[\rho'_y,v'_y]}) - \hat F_\infty (\beta ^{b[\rho, \nabla S]})) =0 $$

Since by the chain of equalities \eqref{eguaglianza} and \eqref{eguaglianza2}, for all $y\in[0,1]$,

\begin{equation}
\hat F_\infty (\beta^{b[\rho'_y,v'_y])}=A^Q(\rho'_y,v'_y)
\end{equation}

\noindent with $(\rho'_y,v'_y)$ defined by \eqref{family}, the dependence of
  $\hat F_\infty (\beta^{b[\rho'_y,v'_y]})$ on $y\in [-1,1]$ is smooth by construction.

Finally, by the convexity of $\hat F_\infty$,

\begin{equation}\label{inequality y}
\hat F_\infty (\beta ^{b[\rho, \nabla S]}) \leq \hat F_\infty (\beta^{b[\rho'_y,v'_y]}),\quad \forall y \in [- 1,1]
\end{equation}

 Then, again by the chain of equalities \eqref{eguaglianza} and \eqref{eguaglianza2} and putting $y=1$ we find

$$
A^Q(\rho,\nabla S)\leq A^Q(\rho',v')
$$

\noindent for all $(\rho',v')$ in $ \Xi_o(\rho_o,\rho_1)$ such that $\rho' := \rho +g$, $g\in C_K^\infty$.

\vskip 5mm

For a generic element $(\rho',v')$  of $ \Xi_o(\rho_o,\rho_1)$ put again $g:= \rho'-\rho$. Then g is smooth and such that $\int_{\R^d} g dx =0$ but its support is non necessarily compact.

 Let $(D_j)_j $ be a sequence of open bounded subsets of $\R^d\times [0,1]$ such that $D_j\uparrow \R^d\times [0,1]$.

 Let also $( h_j )_j \subset C_K^\infty (\mathbb{R}\times [0,1]\rightarrow \mathbb{R})$ be
such that $0\leq h_j \leq 1$ and
\begin{equation}\label{h}
h_j (x,t) = \begin{cases}
                  1     \qquad     (x,t) \in D_j \\
                  0     \qquad     (x,t) \in \underset{m=j+2}{U^\infty} (D_m \setminus D_{m-1})
               \end{cases}
\end{equation}

Then, putting $g_j:= gh_j$, $\rho'_j:=\rho + g_j$ and $v'_j:= v+X_1^{g_j}$, where $X_y^{g_j}$, $y\in [0,1]$, is constructed as before, one has, for all $j\in \mathbb N$,

$$
A^Q(\rho,\nabla S)\leq A^Q(\rho'_j,v'_j)=
$$
$$
=\int _{D_j}(v'^2-(\frac 1 2 \nabla \log \rho')^2)\rho' dx dt)+\int _{D_j^c}(v_j'^2-(\frac 1 2 \nabla \log \rho'_j)^2)\rho'_j dx dt)
$$

Then
$$
A^Q(\rho,\nabla S)\leq \lim_{j\to \infty} A^Q(\rho'_j,v'_j)= A^Q (\rho',v'))
$$
\noindent for all $(\rho',v')$ in $ \Xi_o(\rho_o,\rho_1)$.

\end{proof}

\vskip 4mm
\section{Comparison with the classical Optimal Transportation Problem }

The classical Monge-Kantorovich Optimal Transportation Problem \cite{M}\cite{K} with quadratic cost, in the Monge
formulation, is

\begin{equation}\label{M}
\underset{T:\;T \sharp\rho_o =\rho_1}{inf} \int _{\mathbb{R}^d}
\mid Tx-x\mid^2\rho_0(x)d x
\end{equation}

\noindent where the infimum is taken on the set of maps $T:\mathbb{R}^d \rightarrow \mathbb{R}^d$ that "transport $\rho_0$ onto $\rho_1$".

 In \cite{B} Br\'enier
proved that the solution $T_0$ of problem \eqref{M} is the unique map $T_o: T_o \sharp\rho_o =\rho_1$, which is
a gradient of a convex function $\phi$.

 Moreover, by introducing the "time variable" $t\in [0,1]$, in \cite{B.B.} a computational fluid-mechanical solution of  the Monge problem was constructed.

The new problem can be formulated  as follows (see \cite{V} chap 8):

\begin{equation*}
\underset{(\rho, v)\in \mathbb V(\rho_0,\rho_1)}{inf} \int_{\mathbb{R}^d}\int_0^1 v^2\rho dt dx
\end{equation*}

\vskip 2mm
\noindent where $\mathbb V(\rho_0, \rho_1)$ is a very general set of fluid-dynamical couples $(\rho,v)$ which in particular satisfy the continuity equation in distributional sense.The "set of proper infinitesimal characteristics" connecting $\rho_o$ to $\rho_1$ , and in particular $\Xi_o(\rho_o,\rho_1)$, are subsets of $\mathbb V(\rho_0, \rho_1)$.

 Assuming that $\rho_0$ and $\rho_1$ have finite variance and
 denoting by $\tau_2 (\rho_0, \rho_1)$ the infimum in \eqref{M}, we have the
B\'enamou-Br\'enier formula

$$
\tau_2 (\rho_0, \rho_1) = \underset{(\rho, v)\in
	V_(\rho_0,\rho_1)}{inf} A(\rho, v)
$$

\noindent where

$$
A(\rho,v) :=\int_{\mathbb{R}^d}\int_0^1 v^2\rho dt dx
$$
\noindent which, when restricted to $\Xi_o(\rho_o,\rho_1)$, is the classical limit of $A^Q$. By a simple change of variables the functional $A$ becomes convex.

 Moreover if $(\rho,v)$ is the  solution to the B\'enamou-Br\'enier problem, it turns to be a gradient-flow and, in the smooth case, the classical limit of the free Madelung fluid equations holds, which are equivalent to

$$
\partial _t\rho + \nabla (\rho v)=0
$$

$$
\partial _t v + v \nabla v =0,\quad v_o(x) = \nabla \phi(x) -x
$$
\noindent where $\nabla \phi$ is the solution of the Monge problem.

\newpage
\begin {thebibliography} {}

\bibitem{B}Br\'enier Y.Polar factorization and monotone rearrangement of vector-valued functions,
Communications on pure and applied mathematics, volume 44 (4), 375-417, (1991)

\bibitem{B.B.}B\'enamou J.D. and Y. Br\'enier Y. A computational fluid mechanics solution to the Monge-Kantorovich mass transfer problem, Numerische Mathematik volume 84, 375-393,(2000)

\bibitem{C1}Carlen E. Conservative diffusions, Communications in Mathematical Physics, volume 94, 293-315, (1984)

\bibitem{C2}Carlen E. Stochastic Mechanics: a look back and a look ahead, in Diffusion, Quantum Theory, and Radically Elementary Mathematics, Faris W. ed,117-139, (Princeton, NJ: Princeton University Press) (2006)
\bibitem{C3}Carlen E. Progress and problems in Stochastic Mechanics. In: Gielerak,R., Karwoski,W.(eds ) Stochastic Processes-Mathematics and Physics. Lecture notes in Mathematics. Vol.1158, pp.25-51.Berlin, Heidelberg,New York: Springer (1985)
\bibitem{CPM}Cufaro Petroni N. and Morato L.M. Entangled states in Stochastic Mechanics. Physica A: Math. Gen.{\bf 33}, 5833-5848 (2000)

\bibitem{GM} Guerra F. and Morato L.: Quantization of
Dynamical Systems and Stochastic Control Theory, {\it Phys.Rev.D},
{\bf 27},1774-1786 (1983)

\bibitem{K} Kantorovich L. On the translocation of masses. C.R. (Doklady) Acad. Sci. URSS (N.S.), 37:199-201, (1942).

\bibitem{L} Loffredo M.I. Eulerian Variational Principle in Stochastic Mechanics, private communication, (1986), Rapporto Matematico 226, Universita' di Siena (1990)

\bibitem{M} Monge G. Memoire sur la theorie des deblais et de remblais, Memoires de l'Academie des Sciences (1781)

\bibitem{Nelson1}Nelson E. Dynamical  Theories  of  Brownian  Motion(Princeton,  NJ:  Princeton  University      Press)(1967)
\bibitem{Nelson*}Nelson E. in Seminaire de Probabilites,Vol. XIX of lecture notes in Mathematics, edited by J.Azema and M Yor, Springer, New York,(1984)

 \bibitem{Nelson2} Nelson E. {\it  Quantum Fluctuations} (Princeton University Press)(1985)
 
\bibitem{Pavon} Pavon M. Hamilton's Principle in stochastic mechanics, J. Math. Phys. 36, 6774 (1995)
\bibitem{Yang} Jianhao M. Yang Variational Principle for  Stochastic Mechanics based on Information Measures J. Math. Phys. 62, 102104 (2021)
\bibitem{Yasue} Yasue K , Stochastic Calculus of Variations, J. of Functional
Analysis 41, 327-340 (1981)

\bibitem{V} Villani C. {\it Topics in Optimal Transportation}, American Mathematical Soc. ISBN     978-0-8218-3312-4,(2003)

\end {thebibliography}

\end {document}